\newcommand{\eg}{\emph{e.g.}}
\newcommand{\ie}{\emph{i.e.}}
\newcommand{\code}[1]{\texttt{#1}}
\newcommand{\expr}{\code{e}}
\newcommand{\procname}{p}
\newcommand{\proc}{\textsc{\procname}}
\newcommand{\procP}{\textsc{\procname}}
\newcommand{\stmt}{\code{S}}
\newcommand{\stmtSA}{{\ensuremath{\code{S}_1}}}
\newcommand{\stmtSB}{{\ensuremath{\code{S}_2}}}
\newcommand{\stmtSC}{{\ensuremath{\code{S}_3}}}
\newcommand{\logicalformula}[1]{\varphi^{#1}}
\newcommand{\inv}{\logicalformula{inv}}
\newcommand{\post}{\logicalformula{post}}
\newcommand{\path}{\logicalformula{path}}
\newcommand{\pre}{\logicalformula{pre}}
\newcommand{\vc}{\logicalformula{vc}}
\newcommand{\mainvcfn}{\textsc{postvc}}
\newcommand{\mainvc}[2]{\mainvcfn({#1},{#2})}
\newcommand{\cspre}{\pre_{cs}}
\newcommand{\cspost}{\post_{cs}}
\newcommand{\simplifyfn}{\textsc{Simplify}}
\newcommand{\nextfn}{\textsc{Next}}
\newcommand{\postfn}{\textsc{post}}
\newcommand{\vcfn}{\textsc{vc}}
\newcommand{\tbody}{\textsc{TB}}
\newcommand{\initstatefn}{\textsc{init}}
\newcommand{\pureinv}{\mathrm{pure}(\inv)}
\newcommand{\vals}{\mathcal{V}}
\newcommand{\gvars}{G}
\newcommand{\lvars}{L}
\newcommand{\gmap}{\rho_g}
\newcommand{\gmaps}{\Sigma_G}
\newcommand{\lmap}{\rho_\ell}
\newcommand{\lmaps}{\Sigma_L}
\newcommand{\cstates}{\Sigma_C}
\newcommand{\lstack}{\gamma}
\newcommand{\initstate}{\sigma_{\text{init}}}
\newcommand{\initial}[1]{\textit{initial}(#1)}
\newcommand{\final}[1]{\textit{final}(#1)}
\newcommand{\inputval}[1]{\textit{input}(#1)}
\newcommand{\outputval}[1]{\textit{output}(#1)}
\newcommand{\iosem}[1]{\rightsquigarrow_{#1}}
\newcommand{\iosemP}{\iosem{\proc}}
\newcommand{\sssem}[1]{\rightarrow_{#1}}
\newcommand{\sssemP}{\sssem{\proc}}
\newcommand{\basicevalsto}[3]{ ({#1},{#2}) \Downarrow {#3}}
\newcommand{\evalsto}[4]{ ({#1} \uplus {#2},{#3}) \Downarrow {#4}}
\newcommand{\labrule}[3]{\inferrule*[Lab={[#1]}]{#2}{#3}}
\mathchardef \mhyphen="2D
\lstdefinestyle{mystyle}{
  basicstyle=\footnotesize, 
  captionpos=b,
  numbers=left,
  numbersep=6pt,                  
  morekeywords={if, for, else,
  procedure,modifies, var, returns, int, call, assume},
}
\begin{document}
\frontmatter          
\pagestyle{headings}  
\mainmatter              
\title{Checking Observational Purity of Procedures}
\titlerunning{Checking Observational Purity}  
%

\author{Himanshu Arora\inst{1} \and
Raghavan Komondoor\inst{1} \and
G. Ramalingam\inst{2}
}
%
%
 \institute{Indian Institute of Science, Bangalore\\
  \email{\{himanshua, raghavan\}@iisc.ac.in},
 \and
 Microsoft Research \\
 \email{grama@microsoft.com}}

\maketitle              

\pagestyle{empty}
\begin{abstract}
  Verifying whether a procedure is \emph{observationally pure} is useful in
  many software engineering scenarios. An observationally pure procedure
  always returns the same value for the same argument, and thus mimics a
  mathematical function. The problem is challenging when procedures use
  private mutable global variables, e.g., for memoization of frequently
  returned answers, and when they involve recursion.

  We present a novel verification approach for this problem. Our approach
  involves encoding the procedure's code as a formula that is a disjunction
  of path constraints, with the recursive calls being replaced in the
  formula with references to a mathematical function symbol. Then, a
  theorem prover is invoked to check whether the formula that has been
  constructed agrees with the function symbol referred to above in terms of
  input-output behavior for all arguments.


  We evaluate our approach on a set of realistic examples, using the Boogie
  intermediate language and theorem prover. Our evaluation shows that the
  invariants are easy to construct manually, and that our approach is
  effective at verifying observationally pure procedures.
\end{abstract}

\section{Introduction}

A procedure in an imperative programming language is said to be
\emph{observationally pure} (OP) if for each specific argument value it has
a specific return value, across all possible sequences of calls to the
procedure, irrespective of what other code runs between these calls.  In
other words, the input-output behavior of an OP procedure mimics a mathematical
function.


Any procedure whose code is deterministic and does not read any
pre-existing state other than its arguments is trivially OP.
However, it is common for procedures, especially ones in libraries,
to update and read global variables, typically to optimize their own behavior,
while still mimicking mathematical functions in terms of their input-output behavior.
In this paper, we focus on the problem of checking observational purity of
procedures that read and write global variables, especially in the presence of recursion,
which makes the problem harder.

\subsection{Motivating Example}
We use the example procedure `factCache' in
Listing~\ref{lst:factorialSimple}  as our running example. It
returns n! for the given argument n, and caches the return value for the
last argument provided to it. It uses two ``private'' global variables to
implement the caching -- \code{g}, and \code{lastN}. \code{g} is initialized to -1, and after
the first call to the procedure onwards is used to store the return value
from the most recent call. \code{lastN} is used to store the argument received in
the most recent call. Clearly this procedure is OP, and mimics the
input-output behavior of a regular factorial procedure that does not cache
any results. 

\begin{lstlisting}[float,language=c,basicstyle=\scriptsize,caption= {Procedure factCache:
      returns n!, and memoizes most recent result.},
    label=lst:factorialSimple]
  
int g := -1;
int lastN := 0;
int factCache( int n) {
  if(n <= 1) {
    result := 1;
  } else if (g != -1 && n == lastN) {
    result := g;
  } else {
    g = n * factCache( n - 1 );
    lastN = n;
    result := g;
  }
  return result;
}
\end{lstlisting}

\subsection{Proposed Approach}
\label{ssec:intro:approach}
%
%
%
Our verification approach is based on Floyd-Hoare logic. In order to verify
a recursive procedure inductively, typically a specification of the
procedure would need to be provided. The first idea for such a
specification would be a full functional specification of the
procedure. That is, if someone specifies that factCache mimics n!, then the
verifier could replace Line~10 in the code with `g = n * (n-1)!'. This, on
paper, is sufficient to assert that Line 12 always assigns n! to
`result'. However, to establish that Line~8 also does the same, an invariant
would need to be provided that describes the possible values of g before
any invocation to the procedure. In our example, a suitable invariant would
be `(g = -1) $\vee$ (g = lastN!)'. The verifier would also need to verify
that at the procedure's exit the invariant is re-established. Lines 10-12,
with the recursive call replaced by (n-1)!, suffices on paper to
re-establish the invariant.

The candidate approach mentioned above, while being plausible, suffers from
two weaknesses. The first is that a human would need to guess the
mathematical expression that is implemented by the given procedure. This may not be
easy when the procedure is complex. Second, the underlying theorem prover
would need to prove complex arithmetic properties, e.g., that n * (n-1)! is
equal to n!. Complex proofs such as this may be out of bounds for many
existing theorem provers.

Our key insight is to sidestep the challenges mentioned by introducing a
function symbol, say \emph{factCache}, and replacing the recursive call for the purposes
of verification with this function symbol. Intuitively, \emph{factCache} represents the
mathematical function that the given procedure mimics if the procedure is
OP.  In our example, Line~10 would become `g = n * \emph{factCache}(n-1)'. This step
needs no human involvement. The approach needs an invariant; however, in a
novel manner, we allow the invariant also to refer to \emph{factCache}. In our example,
a suitable invariant would be `(g = -1) $\vee$ (g = lastN *
\emph{factCache}(lastN-1))'. This sort of invariant is relatively easy to construct;
e.g., a human could arrive at it just by looking at Line~2 and with a local
reasoning on Lines~10 and~11. Given this invariant, (a) a theorem prover
could infer that the condition in Line~7 implies that Line~8 necessarily
copies the value of `n * \emph{factCache}(n-1)' into `result'. Due to the transformation to Line~10 mentioned above,
(b) the theorem prover can infer that Line~12 also does the same. Note that since these two expressions are syntactically
identical, a theorem prover can easily establish that they are equal in
value.  Finally, since Line~6 is reached under a different condition than
Lines~8 and~12, the verifier has finished establishing that the procedure
always returns the same expression in n for any given value of n.

Similarly, using the modified Line~10 mentioned above and from Line~11, the
prover can re-establish that g is equal to `lastN * \emph{factCache}(lastN - 1)' when
control reaches Line~12. Hence, the necessary step of proving the given
invariant to be a valid invariant is also complete. 

Note, the effectiveness of the approach depends on the nature of the given
invariant. For instance, if the given invariant was `(g = -1) $\vee$ (g =
lastN!)', which is also technically correct, then the theorem prover may
not be able to establish that in Lines~8 and~12 the variable `g' always
stores the same expression in n. However, it is our claim that in fact it is the
invariant `(g = -1) $\vee$ (g = lastN * \emph{factCache}(lastN-1))' that is
easier to infer by a human or by a potential tool, as justified by us two
paragraphs above.

\subsection{Salient Aspects Of Our Approach}

This paper makes two significant contributions. First, it tackles the
circularity problem that arises due to the use of a presumed-to-be OP
procedure in assertions and invariants and the use of these invariants in
proving the procedure to be OP. This requires us to prove the soundness of
an approach that verifies observational purity as well the validity of
invariants simultaneously (as they cannot be decoupled).

Secondly, as we show, a direct approach to this verification problem (which we
call the existential approach) reduces it to a problem of verifying that a logical formula
is a tautology. The structure of the generated formula, however, makes the resulting
theorem prover instances hard. We show how a conservative approximation can be
used to convert this hard problem into an easier problem of checking satisfiability
of a quantifier-free formula, which is something within the scope of state-of-the-art theorem
provers.


The most closely related previous approaches are by Barnett et
al.~\cite{barnett200499,barnett2006allowing}, and by
Naumann~\cite{naumann2007observational}.  These approaches check observational
purity of procedures that maintain mutable global state. However, none of
these approaches use a function symbol in place of recursive calls or
within invariants. Therefore, it is not clear that these approaches can
verify recursive procedures. 
Barnett et al., in fact, state ``there is a circularity - it would take a delicate argument, and additional conditions,
to avoid unsoundness in this case''.
To the best of our knowledge ours is the first paper to show that it is
feasible to check observational purity of procedures that maintain mutable
global state for optimization purposes and that make use of recursion.

Being able to verify that a procedure is OP has many potential
applications. The most obvious one is that OP procedures can be
memoized. That is, input-output pairs can be recorded in a table, and calls
to the procedure can be elided whenever an argument is seen more than
once. This would not change the semantics of the overall program that calls
the procedure, because the procedure always returns the same value for the
same argument (and mutates only private global variables). Another
application is that if a loop contains a call to an OP procedure, then the
loop can be parallelized (provided the procedure is modified to access and
update its private global variables in a single atomic operation).


The rest of this paper is structured as
follows. Section~\ref{sec:background} introduces the core programming
language that we address. Section~\ref{sec:semantics} provides formal
semantics for our language, as well as definitions of invariants and
observational purity. Section~\ref{sec:vcgen} describes our approach
formally. Section~\ref{sec:invariant} discusses an approach for generating
an invariant automatically in certain cases. Section~\ref{sec:experiments}  describes evaluation of our approach on a few realistic examples.
Section~\ref{sec:related} describes related work.

\newcommand{\elt}{\ensuremath{\in} }
\newcommand{\domain}[1]{#1}

\section{Language Syntax}
\label{sec:background}

In this paper, we assume that the input to the purity checker is a library consisting
of one or more procedures, with shared state consisting of one or more variables
that are private to the library. We refer to these variables as ``global'' variables to
indicate that they retain their values across multiple invocations of the library
procedures, but they cannot be accessed or modified by procedures outside
the library (that is, the clients of the library).

\begin{figure}[t!]
{\tt
\begin{tabular}{rll}
L \elt & \domain{Lib} & ::= $\overline{\code{g := c}}$ $\overline{\code{P}}$ \\
P \elt & \domain{Proc} & ::= p (x) \{ S; return y \} \\
S \elt & \domain{Stmt} & ::=  x := e | x := p(y) | S ; S | if (e) then S else S \\
e \elt & \domain{Expr} & ::= c | x | e op e | unop e \\
op \elt & \domain{Ops} & ::= + | - | / | * | \% | > | < | == | $\wedge$ | $\vee$ \\
unop \elt & \domain{UnOps} & ::= $\neg$ \\
\multicolumn{3}{c}{
x, y \elt  \domain{LocalId} $\cup$ \domain{GlobalId},
g \elt \domain{GlobalId},
c \elt $\vals$,
p \elt \domain{ProcId}
}
\end{tabular}
}
\caption{Programming language syntax and meta-variables}
 \label{fig:grammar}
\end{figure}

In Fig.~\ref{fig:grammar}, we present the syntax of a simple
programming language that we address in this paper.
Given the foundational focus of this work, we keep the programming
language very simple, but the ideas we present can be generalized.
A \code{return} statement is required in each procedure,
and is permitted only as the last
statement of the procedure.
The language does not contain any looping construct.
Loops can be modelled as recursive procedures.
The formal parameters of a procedure are readonly and cannot be
modified within the procedure.
We omit types from the language. We permit only variables of primitive types.
In particular, the language does not allow pointers or dynamic memory allocation.
Note that expressions are pure in this language, and a procedure call
is not allowed in an expression. Each procedure call is modelled as a
separate statement.

For simplicity of presentation, without loss of conceptual generality, we assume
that the library consists of a single (possibly recursive) procedure, with a single formal
parameter.
In the sequel, we will use the symbol $\proc$ (as a metavariable) to
represent this library procedure, $\procname$ (as a metavariable) to
represent the \emph{name} of this procedure, and will assume that the name
of the formal parameter is \code{n}. 
If the procedure is of the form ``$\procname$ \code{ (n) \{ S; return r \}}'', we refer to \code{r} as the \emph{return}
variable, and  refer to ``\code{S; return r}'' as the \emph{procedure body}
and denote it as $\text{body}(\proc)$.
The library also contains, outside of the procedure's code,
a sequence of initializing declarations of
the global variables used in the procedure, of the form ``\code{g1 := c1};
$\ldots$; \code{gN := cN}''. These initializations are assumed to be
performed  once during any execution of the client application,
just before the first call to the
procedure $\proc$ is placed by the client application.

 Finally, a note about terminology: throughout this paper we
 use the word `procedure' to refer to the library procedure $\proc$, and
 use the word
`function' to refer to a mathematical function.

\section{A Semantic Definition of Purity}
\label{sec:semantics}

In this section, we formalize the input-output semantics of the procedure $\proc$ as a relation $\iosem{\proc}$,
where $n \iosem{\proc} r$ indicates that an invocation of $\proc$ with input $n$ may return a result of $r$.
The procedure is then defined to be observationally pure if the relation $\iosem{\proc}$ is a (partial) function:
that is, if  $n \iosem{\proc} r_1$ and $n \iosem{\proc} r_2$, then $r_1 = r_2$.

The object of our analysis is the set of procedures in the library, or,
actually, a single-procedure library (for simplicity of presentation), but
not the entire (client) application. The result of our analysis is valid
for any client program that uses the procedure/library.  The only
assumptions we make are: (a) The shared state used by the library (the
global variables) are private to the library and cannot be modified by the
rest of the program, and (b) The client invokes the library procedures
sequentially: no concurrent or overlapping invocations of the library
procedures by a concurrent client are permitted.

The following semantic formalism is motivated by the above observations. It can be seen as the semantics
of the so-called ``most general sequential client'' of procedure $\proc$, which is the program:
\code{while (*) { x = $\procname$ (random()); }}.
The executions (of $\proc$) produced by this program include all possible executions (of $\proc$)  produced by all
sequential clients.

Let $\gvars$ denote the set of global variables. Let $\lvars$ denote the set of local variables.
Let $\vals$ denote the set of numeric values (that the variables can take).
An element $\gmap \in \gmaps = \gvars \hookrightarrow \vals$ maps global variables to their values.
An element $\lmap \in \lmaps = \lvars \hookrightarrow \vals$ maps local variables to their values.
We define a \emph{local continuation} to be a statement sequence ending with a \code{return} statement.
We use a local continuation to represent the part of the procedure body that still remains to be
executed. Let $\cstates$ represent the set of local continuations.
The set of runtime states (or simply, \emph{states}) is defined to be
$(\cstates \times \lmaps)^* \times \gmaps$, where the first component
represents a runtime stack, and the second component the values of global
variables. We denote individual states using symbols $\sigma, \sigma_1,
\sigma_i$, etc. The runtime stack is a sequence, each element of which is a pair
$(\stmt,\lmap)$ consisting of the remaining procedure fragment $\stmt$ to
be executed and the values of local variables $\lmap$.  We write
$(\stmt,\lmap)\gamma$ to indicate a stack where the topmost entry is
$(\stmt,\lmap)$ and $\gamma$ represents the remaining part of the stack.

We say that a state $((\stmt,\lmap)\gamma,\gmap)$ is an \emph{entry-state}
if its location is at 
the procedure entry point (\ie, if $\stmt$ is the entire body of the procedure),
and we say that it is an \emph{exit-state} if its location is at the procedure exit point
(\ie, if $\stmt$ consists of just a \code{return} statement).

A procedure $\proc$ determines a single-step execution relation
$\sssem{\proc}$, where $\sigma_1 \sssem{\proc} \sigma_2$ indicates that
execution proceeds from state $\sigma_1$ to state $\sigma_2$ in a single
step.  Fig.~\ref{fig:semantics} defines this semantics.  The semantics of
evaluation of a side-effect-free expression is captured by a relation
$\basicevalsto{\rho}{\expr}{v}$, indicating that the expression $\expr$
evaluates to value $v$ in an \emph{environment} $\rho$ (by
\emph{environment}, we mean an
element of $\lmaps \times \gmaps$).  We omit the definition of this
relation, which is straightforward.  We use the notation $\rho_1 \uplus
\rho_2$ to denote the union of two disjoint maps $\rho_1$ and $\rho_2$.

Note that most rules captures the usual semantics of the language constructs.
The last two rules, however, capture the semantics of the most-general sequential
client explained previously: when the call stack is empty, a new invocation of
the procedure may be initiated (with an arbitrary parameter value).


\begin{figure}
\begin{small}
\begin{mathpar}
\labrule{assgn}{
\code{x} \in \lvars \\
\evalsto{\lmap}{\gmap}{\code{e}}{v}
}{
((\code{x := e; S}, \lmap) \lstack, \gmap)
\sssemP 
((\code{S}, \lmap[\code{x} \mapsto v]) \lstack, \gmap)
}

\labrule{assgn}{
\code{x} \in \gvars \\
\evalsto{\lmap}{\gmap}{\code{e}}{v}
}{
((\code{x := e; S}, \lmap) \lstack, \gmap)
\sssemP 
((\code{S}, \lmap) \lstack, \gmap[\code{x} \mapsto v])
}

\labrule{seq}{}{
(((\stmtSA ; \stmtSB) ; \stmtSC , \lmap) \lstack, \gmap) 
\sssemP
((\stmtSA ; (\stmtSB ; \stmtSC) , \lmap) \lstack, \gmap) 
}

\labrule{if-true}{
\evalsto{\lmap}{\gmap}{\expr}{\code{true}}
}{
( (\code{(if (\expr) then \stmtSA else \stmtSB); \stmtSC}, \lmap) \lstack, \gmap)
\sssemP
( (\code{\stmtSA; \stmtSC}, \lmap) \lstack, \gmap)
}

\labrule{if-false}{
\evalsto{\lmap}{\gmap}{\expr}{\code{false}}
}{
( (\code{(if (\expr) then \stmtSA{} else \stmtSB); \stmtSC}, \lmap) \lstack, \gmap)
\sssemP
( (\code{\stmtSB; \stmtSC}, \lmap) \lstack, \gmap)
}

\labrule{call}{
\evalsto{\lmap}{\gmap}{\code{e}}{v} \\
\proc = \code{$\procname$(n) \stmtSA}
}{
((\code{y := $\procname$(e); \stmtSB}, \lmap) \lstack, \gmap)
\sssemP
((\stmtSA, [n \mapsto v]) (\code{y := $\procname$(e); \stmtSB}, \lmap) \lstack, \gmap)
}

\labrule{return}{
\evalsto{\lmap}{\gmap}{\code{r}}{v}
}{
((\code{return r}, \lmap) (\code{y := $\procname$(e); \stmt}, \lmap') \lstack, \gmap)
\sssemP
(\stmt, \lmap'[ \code{y} \mapsto v]) \lstack, \gmap)
}

\labrule{top-level-call}{
\code{B} = \text{body}(\proc) \\
v \in \vals
}{
([], \gmap)
\sssemP
([(\code{B}, [n \mapsto v])], \gmap)
}

\labrule{top-level-return}{
}{
[(\code{return r}, \lmap)], \gmap)
\sssemP
([], \gmap)
}

\end{mathpar}
\end{small}
\caption{
A small-step operational semantics for our language, represented as a relation $\sigma_1 \sssemP \sigma_2$.
Note that a state $\sigma_i$ is a configuration of the form
$((\stmt, \lmap) \lstack, \gmap)$ where
$\stmt$ captures the statements to be executed in the current procedure, 
$\lmap$ assigns values to local variables in the current procedure,
$\lstack$ is the call-stack (excluding the current procedure),
and $\gmap$ assigns values to global variables.
}
\label{fig:semantics}
\end{figure}

Note that all the following definitions are parametric over a given procedure $\proc$.
E.g., we will use the word ``execution'' as shorthand for ``execution of $\proc$''.

We define an \emph{execution} (of $\proc$) to be a sequence of states $\sigma_0 \sigma_1 \cdots \sigma_n$ such that
$\sigma_i \sssem{\proc} \sigma_{i+1}$ for all $0 \leq i < n$.
Let $\initstate$ denote the \emph{initial state} of the library; i.e., this
is the element of $\gmaps$ that is induced by the sequence of initializing declarations of
the library, namely, ``\code{g1 := c1};
$\ldots$; \code{gN := cN}''
We say that an execution $\sigma_0 \sigma_1 \cdots \sigma_n$ is a
\emph{feasible} execution if $\sigma_0 = \initstate$. Note, 
intuitively, a feasible execution  corresponds to the sequence
of states visited within the library across all invocations of
the library procedure over the course of a single execution of
the most-general client mentioned above; also, since the most-general client
supplies a random parameter value to each invocation of $\proc$, in general
multiple feasible executions of the library may exist.
We say that a state $\sigma$ is \emph{ reachable} if there exists an execution $\pi = \initstate \sigma_1 \cdots \sigma$.


We define a \emph{trace} (of $\proc$) to be a substring 
 $\pi = \sigma_0 \cdots \sigma_n$ of a feasible execution such that:
 (a) $\sigma_0$ is entry-state
 (b) $\sigma_n$ is an exit-state, and
 (c) $\sigma_n$ corresponds to the return from the invocation represented
by $\sigma_0$.
In other words, a trace is a state sequence corresponding to a single
invocation of the procedure. A trace may contain within it nested
sub-traces due to recursive calls, which are themselves traces.
Given a trace $\pi = \sigma_0 \cdots \sigma_n$, we define
$\initial{\pi}$ to be $\sigma_0$,
$\final{\pi}$ to be $\sigma_n$,
$\inputval{\pi}$ to be value of the input parameter in $\initial{\pi}$,
and $\outputval{\pi}$ to be the value of the return variable in $\final{\pi}$.


We define the relation $\iosemP$ to be $\{ (\inputval{\pi},\outputval{\pi}) \; | \; \pi \text{ is a trace of } \proc \}$.

\begin{definition}[Observational Purity]
\label{def:purity}
A procedure $\proc$ is said to be \emph{observationally pure} if the relation $\iosem{\proc}$ is a (partial) function:
that is, if for all $n$, $r_1$, $r_2$, if  $n \iosem{\proc} r_1$ and $n \iosem{\proc} r_2$, then $r_1 = r_2$.
\end{definition}

\subsubsection{Logical Formula and Invariants.}

Our methodology makes use of \emph{logical formulae} for different
purposes, including to express a given \emph{invariant}. 
Our logical formulae use the local and global variables in the library
procedure as free variables, use
the same operators as allowed in our language, and make use of universal as
well as existential quantification. 
Given a formula $\varphi$, 
we write $\rho \models \varphi$
to denote that $\varphi$ evaluates to true when its free variables are
assigned values from the environment $\rho$.

As discussed in Section~\ref{ssec:intro:approach},
one of our central ideas is to allow the names of the library procedures to
be referred to in the invariant; \eg, our running example becomes amenable
to our analysis using an invariant such as `(g = -1) $\vee$ (g = lastN *
\emph{factCache}(lastN-1))'.  We therefore allow the use of library
procedure names (in our simplified presentation, the name $\procname$) as free variables in
logical formulae. Correspondingly, 
we let each environment $\rho$ map each procedure name to a mathematical
function in addition to mapping variables to numeric values, and extend the
semantics of $\rho \models \varphi$ by substituting the values of both
variables and procedure names in $\varphi$ from the environment $\rho$.


Given an environment $\rho$, a procedure name $\procname$, and a mathematical function $f$, we will write
$\rho[\procname \mapsto f]$ to indicate the updated environment that maps
$\procname$ to the value $f$ and maps every
other variable $x$ to its original value $\rho[x]$.
We will write $(\rho,f) \models \varphi$ to denote that $\rho[\procname \mapsto f] \models \varphi$.

Given a state $\sigma = ((\stmt,\lmap)\gamma,\gmap)$, 
we define $\text{env}(\sigma)$ to be $\lmap \uplus \gmap$,  and given a
state $\sigma = ([],\gmap)$, we define $\text{env}(\sigma)$ to be just $\gmap$.
We write $(\sigma,f) \models \varphi$ to denote that $(\text{env}(\sigma),f) \models \varphi$.
For any execution $\pi$, we write $(\pi,f) \models \varphi$ if for every entry-state and exit-state
$\sigma$ in $\pi$, $(\sigma,f) \models \varphi$.
We now introduce another definition of observational purity.
\begin{definition}[Observational Purity wrt an Invariant]
\label{def:pureinv}
Given an invariant $\inv$, 
a library procedure $\proc$ is said to satisfy $\pureinv$ if
there exists a function $f$ such that for every trace $\pi$ of $\proc$,
$\outputval{\pi} = f(\inputval{\pi})$ and $(\pi,f) \models \inv$.
\end{definition}
It is easy to see that if procedure $\proc$ satisfies $\pureinv$ wrt any
given candidate invariant $\inv$, then $\proc$ is observationally pure as
per Definition~\ref{def:purity}. 




\newcommand{\existformula}{\psi^e}
\newcommand{\EA}{\textsc{ea}}
\newcommand{\IW}{\textsc{iw}}

\newcommand{\initformula}{\logicalformula{init}}

\section{Checking Purity Using a Theorem Prover}
\label{sec:vcgen}

In this section we provide two different approaches that, given a procedure
$\proc$ and a candidate invariant $\inv$, use a theorem prover to check
conservatively whether procedure $\proc$ satisfies $\pureinv$.

\subsection{Verification Condition Generation}

We first describe an adaptation of standard verification-condition generation
techniques that we use as a common first step in both our approaches.
Given a procedure $\procP$, a candidate invariant $\inv$, our goal is to compute a
pair $(\post,\vc)$ where $\post$ is a postcondition describing the state that exists after an execution of
$\text{body}(\procP)$ starting from a state that satisfies $\inv$, and $\vc$ is a verification-condition that must hold true
for the execution to satisfy its invariants and assertions.

We first transform the procedure body as below to create an internal representation that is input to the
postcondition and verification condition generator. In the internal representation, we allow the following
extra forms of statements (with their usual meaning): \code{havoc(x)}, \code{assume e}, and  \code{assert e}.
\begin{enumerate}
\item For any assignment statement ``\code{x := e}'' where \code{e} contains \code{x}, we introduce a new temporary
variable \code{t} and replace the assignment statement with ``\code{t := e; x := t}''.
\item For every procedure invocation ``\code{x := $\procname$(y)}'', we first ensure that \code{y} is a local variable (by introducing
a temporary if needed). We then replace the statement by the code fragment
``\code{assert $\inv$; havoc(g1); ... havoc(gN); assume $\inv \wedge$ x = $\procname$(y)}'',
where \code{g1} to \code{gN} are the global variables.

Note that the
function call has been eliminated, and replaced with an ``assume''
expression that refers to the function symbol $\procname$. In other words,
there are no function calls in the transformed procedure.
\item We replace the ``\code{return x}'' statement by ``\code{assert $\inv$}''.
\end{enumerate}
Let $\tbody(\proc, \inv)$ denote the transformed body of procedure $\proc$ obtained as above.


\begin{figure}
\[
\begin{array}{ll}
\postfn(\pre, \code{x := e}) &= (\exists \code{x}. \pre) \wedge (\code{x =
  e}) \ (\text{if } \code{x} \not\in \text{vars}(\code{e})) \\
\postfn(\pre, \code{havoc(x)}) &= \exists \code{x}. \pre \\
\postfn(\pre, \code{assume e}) &= \pre \wedge \code{e} \\
\postfn(\pre, \code{assert e}) &= \pre \\
\postfn(\pre, \stmtSA ; \stmtSB) &= \postfn( \postfn(\pre, \stmtSA), \stmtSB) \\
\multicolumn{2}{l}{\postfn(\pre, \code{if \expr{} then \stmtSA{} else \stmtSB{}}) = \postfn(\pre \wedge \expr, \stmtSA) \vee \postfn(\pre \wedge \neg \expr, \stmtSB)} \\
\\
\vcfn(\pre, \code{assert e}) &= (\pre \Rightarrow e) \\
\vcfn(\pre, \stmtSA ; \stmtSB) &= \vcfn(\pre, \stmtSA) \wedge \vcfn( \postfn(\pre, \stmtSA), \stmtSB) \\
\multicolumn{2}{l}{
\vcfn(\pre, \code{if \expr{} then \stmtSA{} else \stmtSB{}}) = \vcfn(\pre \wedge \expr, \stmtSA) \wedge \vcfn(\pre \wedge \neg \expr, \stmtSB)
} \\
\vcfn(\pre, \stmt) &= \text{true} (\text{for all other \stmt}) \\
\\
\mainvc{\procP}{\inv} = (\postfn(&\inv, \tbody(\proc,\inv)), \vcfn(\inv,
\tbody(\proc,\inv)) \wedge (\initstatefn(\proc) \Rightarrow \inv))
\end{array}
\]
\caption{Generation of verification-condition and postcondition.}
\label{fig:vcgen}
\end{figure}

We then compute postconditions as formally described in Fig.~\ref{fig:vcgen}.
This lets us compute for each program point $\ell$ in the procedure,
a condition $\varphi_{\ell}$ that describes what we expect to hold true when execution reaches $\ell$ if we start
executing the procedure in a state satisfying $\inv$ and if every recursive invocation of the procedure also
terminates in a state satisfying $\inv$. We compute this using the standard rules for the postcondition of a statement.
%
For an assignment statement ``\code{x := e}'', we use existential quantification over \code{x} to represent the value
of \code{x} prior to the execution of the statement. If we rename these existentially quantified variables with unique new
names, we can lift all the existential quantifiers to the outermost level. When transformed thus, the condition $\varphi_{\ell}$
takes the form $\exists x_1 \cdots x_n. \varphi$, where $\varphi$ is quantifier-free and $x_1, \cdots, x_n$ denote
intermediate values of variables along the execution path from procedure-entry to program point $\ell$.

We compute a verification condition $\vc$ that represents the conditions we must check to ensure that
an execution through the procedure satisfies its obligations: namely, that the invariant holds true at every call-site
and at procedure-exit. Let $\ell$ denote a call-site or the procedure-exit. We need to check that $\varphi_{\ell} \Rightarrow \inv$
holds. Thus, the generation verification condition essentially consists of the conjunction of this check over all call-sites
and procedure-exit.

Finally, the function $\mainvcfn$ computes the postcondition and verification condition for the entire procedure as shown
in Fig.~\ref{fig:vcgen}. Note that this adds the check that the initial state too must satisfy $\inv$ as the basis condition for
induction. $\initstatefn(\proc)$ is basically the formula  ``\code{g1 = c1}
$\wedge \ldots$ \code{gN = cN}'' (see Section~\ref{sec:background}).

\paragraph{Example}
We now illustrate the postcondition and verification condition generated from our factorial example
presented in Listing~\ref{lst:factorialSimple}. Listing~\ref{lst:factorialTransformed} shows the example
expressed in our language and transformed as described earlier (using function $\tbody$), using a
supplied candidate invariant $\inv$.

\begin{lstlisting}[float,language=c,mathescape=true,basicstyle=\scriptsize,caption= {Procedure factCache from
      Listing~\ref{lst:factorialSimple} transformed to incorporate a supplied candidate
      invariant $\inv$.}, label=lst:factorialTransformed]
g := -1;
lastN := 0;
factCache (n) {
  if(n <= 1) {
    result := 1;
  } else if (g != -1 && n == lastN) {
    result := g;
  } else {
    t1 := n-1;
    // t2 := factCache(t1);
    assert $\inv$;
    havoc (g); havoc (lastN);
    assume $\inv \wedge$ (t2 = factCache(t1));
    g := n * t2;
    lastN := n;
    result := g;
  }
  // return result;
  assert $\inv$;
}
\end{lstlisting}

Fig.~\ref{fig:pathCondition} illustrates the computation of postcondition and verification condition from
this transformed example. In this figure, we use $\cspre$ to denote the precondition computed to hold
just before the recursive callsite, and $\cspost$ to denote the postcondition computed to hold just
after the recursive callsite. The postcondition $\post$ (at the end of the procedure body) is itself
a disjunction of three path-conditions representing execution through the three different paths in
the program. In this illustration, we have simplified the logical conditions by omitting useless existential
quantifications (that is, any quantification of the form $\exists x. \psi$ where $x$ does not occur in $\psi$).
Note that the existentially quantified \code{g} and \code{lastN} in $\cspost$ denote the values of these
globals \empty{before} the recursive call. Similarly, the existentially quantified \code{g} and \code{lastN} in
$\path_3$ denote the values of these globals when the recursive call terminates, while the free variables
\code{g} and \code{lastN} denote the final values of these globals.

\begin{figure}
\begin{align*}
\initstatefn(\proc) &= \code{(g = -1)} \wedge \code{(lastN = 0)} \\
\path_1 &= \inv \wedge (\code{n <= 1}) \wedge (\code{result = 1}) \\
\path_2 &= \inv \wedge \neg(\code{n <= 1}) \wedge (\code{g != 1}) \wedge (\code{n = lastN}) \wedge (\code{result = g}) \\
\cspre &= \inv \wedge \neg(\code{n <= 1}) \wedge \neg((\code{n = lastN}) \wedge (\code{result = g})) \wedge (\code{t1 = n-1}) \\
\cspost &= (\exists \code{g} \exists \code{lastN} \; \cspre) \wedge \inv \wedge (\code{t2 = \emph{factCache}(t1)}) \\
\path_3 &= (\exists \code{g} \exists \code{lastN} \; \cspost) \wedge (\code{g = n * t2}) \wedge (\code{last N = n}) \wedge (\code{result = g}) \\
\post &= \path_1 \vee \path_2 \vee \path_3 \\
\vc &= (\cspre \Rightarrow \inv) \wedge (\post \Rightarrow \inv) \wedge (\initstatefn(\proc) \Rightarrow \inv)
\end{align*}
\caption{
The different formulae computed from the procedure in Listing~\ref{lst:factorialTransformed} 
by our postcondition and verification-condition computation.
}
\label{fig:pathCondition}
\end{figure}


\subsection{Approach 1: Existential Approach}

Let $\proc$ be a procedure with input parameter $n$ and return variable $r$.
Let\\ $\mainvc{\proc}{\inv}$ = $(\post,\vc)$.
Let $\existformula$ denote the formula $\vc \wedge (\post \Rightarrow (r = p(n)))$.
Let $\overline{x}$ denote the sequence of all free variables in $\existformula$ except for $p$.
We define $\EA(\proc,\inv)$ to be the formula $ \forall \overline{x}. \existformula$.

In this approach, we use a theorem prover to check whether $\EA(\proc,\inv)$ is satisfiable.
As shown by the following theorem, satisfiability of $\EA(\proc,\inv)$ establishes that $\proc$
satisfies $\pureinv$.

\begin{theorem}
\label{theorem:EA}
A procedure $\proc$ satisfies $\pureinv$ if
$\exists p. \EA(\proc,\inv)$ is a tautology
(which holds iff $\EA(\proc,\inv)$ is satisfiable).
\end{theorem}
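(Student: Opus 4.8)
The plan is to unpack the definition of $\pureinv$ (Definition~\ref{def:pureinv}) and supply the witness function $f$ from the satisfying assignment to $p$. Concretely, assume $\EA(\proc,\inv) = \forall \overline{x}.\,\existformula$ is satisfiable; then there is a mathematical function $f$ such that interpreting the symbol $p$ as $f$ makes $\forall \overline{x}.\,\existformula$ true, i.e.\ for \emph{every} assignment to the remaining free variables $\overline{x}$, both $\vc$ holds and $\post \Rightarrow (r = p(n))$ holds (with $p$ read as $f$). The claim to establish is that this same $f$ witnesses $\pureinv$: for every trace $\pi$, $\outputval{\pi} = f(\inputval{\pi})$ and $(\pi,f)\models\inv$.

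The main work is an induction on the structure of traces, using the fact that a trace of $\proc$ corresponds to one invocation whose nested sub-traces (from recursive calls) are themselves traces, with strictly smaller ``segment depth''. I would set up the induction so that the inductive hypothesis gives, for every strict sub-trace $\pi'$ nested inside $\pi$, that $\outputval{\pi'} = f(\inputval{\pi'})$ and that $(\pi',f)\models\inv$ — in particular the invariant holds in the global state just after each recursive return. The key soundness lemma to state and prove (this is where the real content lives) is a bridge between the operational semantics of $\tbody(\proc,\inv)$ and the predicate-transformer semantics of Fig.~\ref{fig:vcgen}: if execution of $\tbody(\proc,\inv)$ starts in an environment satisfying $\inv$, and each \code{havoc}/\code{assume} block standing in for a recursive call is ``resolved'' consistently with $f$ (the havoc'd globals take the values produced by the actual recursive execution, and the assumed equation $\code{x}=p(\code{y})$ holds because by IH the sub-trace returns $f$ of its input), then the actual post-state environment satisfies $\postfn(\inv,\tbody(\proc,\inv)) = \post$; and along the way every \code{assert} encountered is discharged because $\vc$ holds under $f$ for the relevant assignment to $\overline{x}$. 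Granting this lemma: the real execution of $\text{body}(\proc)$ on input $n$ can be matched step-for-step with an execution of the transformed body in which the intermediate ($\exists$-quantified) variables $\overline{x}$ are instantiated to the concrete intermediate values actually observed; hence the final environment satisfies $\post$; since $\vc$'s conjunct $\post \Rightarrow \inv$ holds under $f$, the exit-state satisfies $\inv$, and combined with the entry-state satisfying $\inv$ (by IH at the caller, or by $\initstatefn(\proc)\Rightarrow\inv$ at the top of a feasible execution) we get $(\pi,f)\models\inv$; and since $\existformula$'s conjunct $\post \Rightarrow (r = p(n))$ holds under $f$, we get $\outputval{\pi} = f(\inputval{\pi})$. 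The base case (a trace with no nested sub-traces) is the same argument with the inductive hypothesis vacuous. Finally, by the remark following Definition~\ref{def:pureinv}, satisfying $\pureinv$ implies $\proc$ is observationally pure; and the parenthetical equivalence ``$\exists p.\,\EA$ is a tautology iff $\EA$ is satisfiable'' is immediate since $\EA$'s only extra free symbol is $p$.

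The hard part will be the bridge lemma, specifically getting the correspondence between the existentially-quantified intermediate variables in $\post$ and the concrete per-step values in the operational trace exactly right, and handling the \code{havoc}/\code{assume} encoding of recursive calls: one must check that the ``angelic'' choices the predicate transformer allows for the havoc'd globals can always be instantiated to the \emph{actual} values delivered by the recursive execution, which is precisely where the inductive hypothesis $(\pi',f)\models\inv$ (so the \code{assert $\inv$} before the call succeeds) and $\outputval{\pi'}=f(\inputval{\pi'})$ (so the \code{assume} $\code{x}=p(\code{y})$ is consistent) are both consumed. A secondary subtlety is that $\overline{x}$ is quantified \emph{universally} in $\EA$ while the intermediate variables in $\post$ are \emph{existential}: I need to be careful that ``$\vc$ holds for all $\overline{x}$'' is what licenses discharging the asserts regardless of which concrete intermediate values arise, and that ``$\post\Rightarrow r=p(n)$ holds for all $\overline{x}$'' lets me specialize to the concrete witness. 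I expect the remaining steps — disjunction over paths, the structural cases of $\postfn$ and $\vcfn$, and the lifting of quantifiers — to be routine once the lemma is stated precisely.
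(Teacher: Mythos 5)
Your proposal follows essentially the same route as the paper's own proof: extract the witness function $f$ from the satisfying assignment to $p$, induct on traces (the paper phrases this as a shortest-counterexample argument over nesting depth), use the soundness of the postcondition/VC computation relative to the operational semantics to propagate $\inv$ through statements and recursive calls, and finish with the conjuncts $\post \Rightarrow \inv$ and $\post \Rightarrow (r = p(n))$. The ``bridge lemma'' you isolate is exactly the step the paper asserts without proof, and your handling of the havoc/assume encoding and the quantifier subtlety is, if anything, more explicit than the paper's.
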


\begin{proof}
Note that $p$ is the only free variable in $\EA(\proc,\inv)$. Assume that $[p \mapsto f]$ is a
satisfying assignment for $\EA(\proc,\inv)$.
We prove that for every trace $\pi$ the following hold:
(a) $\outputval{\pi} = f(\inputval{\pi})$ and
(b) If $(\initial{\pi},f)$ satisfies $\inv$, then $(\final{\pi},f)$ also satisfies $\inv$.

The proof is by contradiction. Let $\pi$ be the shortest trace that does not satisfy at least
one of the two conditions (a) and (b).

Let us first consider a trace $\pi$ without any sub-traces (\ie, without a procedure call).
Consider any transition $\sigma_i \sssemP \sigma_{i+1}$ in $\pi$
caused by an assignment statement $\stmt$. Our verification-condition generation produces
a precondition $\pre_{\stmt}$ and a postcondition $\post_{\stmt}$ for $\stmt$. Further, this generation
guarantees that if $(\sigma_i,f)$ satisfies $\pre_{\stmt}$, then $(\sigma_{i+1},f)$ satisfies $\post_{\stmt}$.
This lets us prove (via induction over $i$), that if $(\initial{\pi},f)$ satisfies $\inv$,
then $(\sigma_{i+1},f)$ satisfies $\post_{\stmt}$.

We now consider a trace $\pi$ that contains a sub-trace $\pi' = \sigma_i \cdots \sigma_j$ corresponding to a procedure call
statement $\stmt$. Our verification-condition generation produces a precondition $\pre_{\stmt}$,
a postcondition $\post_{\stmt}$  and a conjunct $\pre_{\stmt} \Rightarrow \inv$ in $\vc$ for $\stmt$.
We extend the induction over $i$ to handle recursive calls as below.
Our inductive hypothesis guarantees that $(\sigma_i,f)$ satisfies $\pre_{\stmt}$.
Further, we know that $[p \mapsto f]$ is a satisfying assignment for $\EA(\proc,\inv)$,
which includes the conjunct $\pre_{\stmt} \Rightarrow \inv$. Hence, it follows that  $(\sigma_i,f)$ satisfies $\inv$.
Since $\pi'$ is a shorter trace than $\pi$, we can assume that it satisfies conditions (a)
and (b). This is sufficient to guarantee that $(\sigma_j,f)$ satisfies $\post_{\stmt}$.

Thus, we can establish $(\final{\pi},f)$ satisfies $\post$ and $\inv$. Further, since 
$\EA(\proc,\inv)$ includes the conjunct $\post \Rightarrow (r = p(n))$, it follows that
$\outputval{\pi} = f(\inputval{\pi})$.
\end{proof}

\subsection{Approach 2: Impurity Witness Approach}

The existential approach presented in the previous section has a drawback. Checking satisfiability of $\EA(\proc,\inv)$
is hard because it contains universal quantifiers and existing theorem provers do not work well enough for this
approach. We now present an approximation of the existential approach that is easier to use with existing theorem
provers. This new approach, which we will refer to as the impurity witness approach, reduces the problem to
that of checking whether a quantifier-free formula is unsatisfiable, which is better suited to the capabilities of
state-of-the-art theorem provers. This approach focuses on finding a counterexample to show that the
procedure is impure or it violates the candidate invariant.

Let $\proc$ be a procedure with input parameter $n$ and return variable $r$.
Let $\mainvc{\proc}{\inv}$ = $(\post,\vc)$.
Let $\post_\alpha$ denote the formula obtained by replacing every free variable $x$ other than $p$ in $\post$
by a new free variable $x_\alpha$. Define $\post_\beta$ similarly.
Define $\IW(\proc, \inv)$ to be the formula $(\neg \vc) \vee (\post_\alpha \wedge \post_\beta \wedge (n_\alpha = n_\beta) \wedge (r_\alpha \neq r_\beta))$.

The impurity witness approach checks whether $\IW(\proc, \inv)$ is satisfiable. This can be done by separately checking
whether $\neg \vc$ is satisfiable and whether $(\post_\alpha \wedge \post_\beta \wedge (n_\alpha = n_\beta) \wedge (r_\alpha \neq r_\beta))$
is satisfiable. As formally defined, $\vc$ and $\post$ contain embedded existential quantifications. As explained earlier,
these existential quantifiers can be moved to the outside after variable renaming and can be omitted for a satisfiability check.
(A formula of the form $\exists \overline{x}. \psi$ is satisfiable iff $\psi$ is satisfiable.)
As usual, these existential quantifiers refer to intermediate values of variables along an execution path.
Finding a satisfying assignment to these variables essentially identifies a possible execution path (that
satisfies some other property).

\begin{theorem}
A procedure $\proc$ satisfies  $\pureinv$ if $\IW(\proc, \inv)$ is unsatisfiable.
\end{theorem}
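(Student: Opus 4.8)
The plan is to reduce this theorem to Theorem~\ref{theorem:EA}, which we have already established: a procedure $\proc$ satisfies $\pureinv$ whenever $\EA(\proc,\inv)$ is satisfiable. So it suffices to show that if $\IW(\proc,\inv)$ is unsatisfiable, then $\EA(\proc,\inv)$ is satisfiable. We will prove the contrapositive in a slightly reorganized form: assuming $\EA(\proc,\inv)$ is \emph{not} satisfiable, we exhibit a satisfying assignment for $\IW(\proc,\inv)$.

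First I would unfold the definitions. Recall $\existformula = \vc \wedge (\post \Rightarrow (r = p(n)))$, and $\EA(\proc,\inv) = \forall \overline{x}.\existformula$ where $\overline{x}$ lists all free variables of $\existformula$ except $p$. Unsatisfiability of $\EA(\proc,\inv)$ means: for every interpretation of $p$ as a function $f$, there exist values for $\overline{x}$ falsifying $\existformula$ — i.e., either $\vc$ fails, or $\post$ holds while $r \neq p(n)$. The key move is to make the choice of $f$ \emph{adversarial against itself}: I would argue that the impurity-witness disjunction $(\post_\alpha \wedge \post_\beta \wedge (n_\alpha = n_\beta) \wedge (r_\alpha \neq r_\beta))$ captures exactly the situation where no single $f$ can be consistent with $\post$ on all inputs, and $\neg\vc$ captures the situation where $\vc$ itself is violable. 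More precisely, I would show: if $\neg\vc$ is satisfiable we are immediately done (that disjunct of $\IW$ is satisfiable); otherwise $\vc$ is valid (for all values of its free variables and all $f$), and then unsatisfiability of $\EA$ forces, for each $f$, some assignment with $\post$ true but $r \neq f(n)$. Taking $f$ to be, in particular, any function, the failure of a single $f$ to witness $r = p(n)$ everywhere on the graph of $\post$ means there are two assignments to the free variables of $\post$ agreeing on $n$ but disagreeing on $r$ — otherwise the partial function ``$n \mapsto$ the common $r$ value'' could be extended to a total $f$ witnessing $\EA$. Renaming the free variables of these two assignments with the $\alpha$ and $\beta$ subscripts yields a satisfying assignment for $\post_\alpha \wedge \post_\beta \wedge (n_\alpha = n_\beta) \wedge (r_\alpha \neq r_\beta)$, hence for $\IW(\proc,\inv)$.

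The step I expect to be the main obstacle is the careful handling of the function symbol $p$ across the two copies $\post_\alpha$ and $\post_\beta$: note that $\alpha$-renaming replaces every free variable \emph{other than} $p$, so $p$ is shared between the two copies. I need to argue that a single interpretation $f$ of $p$ can be chosen so that $\post$ is simultaneously satisfiable with two different $(n,r)$-pairs that agree on $n$; this requires showing that the ``bad'' $f$ supplied by unsatisfiability of $\EA$ can be taken uniformly, or alternatively that the existence of \emph{some} $f$ making $\post$ a non-function (in the $n \mapsto r$ sense) follows from the non-existence of \emph{any} $f$ closing $\EA$. The cleanest way is a direct argument: suppose $\vc$ is valid but $\IW$'s second disjunct is unsatisfiable; then for every $f$, the set $\{(n,r) : (\rho[p\mapsto f],\ldots) \models \post \text{ for some completion } \rho\}$ is single-valued in $n$, so it defines a partial function $g_f$; choosing $f$ to be any fixed function and then observing that picking $f := g_f$ (a fixed-point-style choice, legitimate here because $\post$'s dependence on $p$ is only through the recursive-call assumptions, which a consistent invariant pins down) gives an interpretation satisfying $\post \Rightarrow (r = p(n))$, hence satisfying $\existformula$ for all $\overline{x}$ — contradicting unsatisfiability of $\EA$. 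Making this fixed-point choice rigorous (well-definedness of $g_f$ independent of the ``seed'' $f$, or a well-founded construction mirroring the trace-induction in the proof of Theorem~\ref{theorem:EA}) is where the real work lies; I would lean on the structure already exposed in that earlier proof rather than redo it from scratch.
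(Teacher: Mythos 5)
Your proposal departs from the paper's argument in a structural way: the paper proves the contrapositive \emph{semantically}, by showing that if $\proc$ fails $\pureinv$ then there is a minimal impurity witness (a pair of traces with equal inputs and distinct outputs) or a minimal $\inv$-violation witness, and that the variable valuations induced by such traces satisfy the second disjunct of $\IW(\proc,\inv)$ or $\neg\vc$ respectively. You instead attempt a purely formula-level reduction to Theorem~\ref{theorem:EA}: unsatisfiability of $\IW(\proc,\inv)$ should yield satisfiability of $\EA(\proc,\inv)$. This would be an attractive and genuinely different route if it worked, and it would even establish a relationship between the two checks that the paper itself never claims (the paper only calls $\IW$ a ``conservative approximation'' and explicitly notes the approximations are not tight).

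The gap is exactly the step you flag as ``where the real work lies,'' and it is not a technicality that can be discharged by leaning on the earlier proof. Unsatisfiability of $\post_\alpha \wedge \post_\beta \wedge (n_\alpha = n_\beta) \wedge (r_\alpha \neq r_\beta)$ gives you, for \emph{each} interpretation $f$ of $p$, that the induced relation $g_f = \{(n,r) \mid \post \text{ is satisfiable under } [p \mapsto f]\}$ is single-valued; but $\EA(\proc,\inv)$ demands an $f$ with $g_f \subseteq \mathrm{graph}(f)$, a fixed-point condition that does not follow from single-valuedness of every $g_f$. At the level of abstract formulas the implication is simply false: take $\vc = \mathrm{true}$ and $\post \equiv (r = p(n) + 1)$; then $\IW$ is unsatisfiable (each $g_f$ is the function $n \mapsto f(n)+1$, hence single-valued), yet no $f$ satisfies $\post \Rightarrow (r = p(n))$, so $\EA$ is unsatisfiable. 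To rule out such pathologies you must use the specific way $\post$ is generated from $\proc$ --- in particular the correspondence between satisfying assignments of $\post$ and actual traces whose sub-traces behave like $f$ --- and that correspondence, together with the well-founded induction over nested sub-traces needed to break the circularity in defining $f$, is precisely the semantic content of the paper's direct proof. As written, your argument assumes the conclusion of that induction rather than supplying it, so the proof is incomplete at its central step.
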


\begin{proof}
We prove the contrapositive.

We say that a pair of feasible executions $(\pi_1, \pi_2)$ is an impurity witness if there is a trace
$\pi_a$ in $\pi_1$ and a trace $\pi_b$ in $\pi_2$ such that $\pi_a$ and $\pi_b$ have the same input
value but different return values. Otherwise, we say that $(\pi_1, \pi_2)$ is pure. We extend
this notion and say that a single execution $\pi$ is pure if $(\pi,\pi)$ is pure.

We say that a function $f$ is \emph{compatible} with a set of executions $\Pi$ if for every trace
$\pi \in \Pi$, $\outputval{\pi} = f(\inputval{\pi})$.

We say that a pure feasible execution $\pi$ is a $\inv$-violation witness if there is some function $f$
that is compatible with $\{\pi\}$ such that $(\pi,f) \nvDash \inv$. Otherwise, we say that $\pi$ satisfies
the invariant. Note that this definition introduces a conservative approximation as we explain later.

We will consider \emph{minimal} witnesses of the following form.
We say that an impurity witness $(\pi_1, \pi_2)$ is minimal if for every proper prefix $\pi_1'$ of
$\pi_1$ and every proper prefix $\pi_2'$ of $\pi_2$, the following hold:
(a) $(\pi_1',\pi_2)$ is pure,
(b) $(\pi_1,\pi_2')$ is pure,
(c) $\pi_1'$ satisfies the invariant, and
(d) $\pi_2'$ satisfies the invariant.

We say that an $\inv$-violation witness $\pi$ is minimal if no proper prefix $\pi'$ of
$\pi$ is a $\inv$-violation witness.

If $\proc$ does not satisfy $\pureinv$, then there exists a minimal impurity witness
or a minimal $\inv$-violation witness. Note that our definition of $\inv$-violation witness is
a conservative approximation and the converse of the preceding claim does not hold.
A  $\inv$-violation witness does not mean that $\proc$ does not satisfy $\pureinv$.

In the first case, $(\post_\alpha \wedge \post_\beta \wedge (n_\alpha = n_\beta) \wedge (r_\alpha \neq r_\beta))$
must be satisfiable (as we show below). In the second case, $\neg \vc$ must be satisfiable (as we show below).
Thus,  $\IW(\proc, \inv)$  is satisfiable in either case. The theorem follows.

We establish the above result as below.
If we have a trace that satisfies the invariant and the set of its sub-traces are pure,
then the valuations assigned to variables by the trace satisfies $\post$.
Thus, if $(\pi_1,\pi_2)$ are a minimal impurity witness, let $\pi_a$ and $\pi_b$
be the two traces in $\pi_1$ and $\pi_2$ that are incompatible.
We can assign values to variables in $\post_\alpha$ from $\pi_a$, and
assign values to variables in $\post_\beta$ from $\pi_b$ to get a satisfying
assignment for $(\post_\alpha \wedge \post_\beta \wedge (n_\alpha = n_\beta) \wedge (r_\alpha \neq r_\beta))$.

If $\pi$ is a minimal $\inv$-violation witness, let $\pi'$ be the trace that contains the invariant violation.
Assigning values to variables as in $\pi'$ produces a satisfying assignment for
$\neg \vc$.
\end{proof}

\section{Generating the Invariant}
\label{sec:invariant}

We now describe a simple but reasonably effective semi-algorithm for
generating a candidate invariant automatically from the given
procedure. Our approach of Section~\ref{sec:vcgen} can be used with a manually
provided invariant or the candidate invariant generated by this
semi-algorithm (whenever it terminates).

The invariant-generation approach is iterative and computes a sequence of progressively weaker
candidate invariants $I_0, I_1, \cdots$ and terminates if and when $I_m \equiv I_{m+1}$, at
which point $I_m$ is returned as the candidate invariant.
The initial candidate invariant $I_0$ captures the initial values of the global variable.
In  iteration $k$, we apply a procedure similar to the one described in Section~\ref{sec:vcgen} and
compute the strongest conditions that hold true at every program point if the execution of the
procedure starts in a state satisfying $I_{k-1}$ and if every recursive invocation terminates in a
state satisfying $I_{k-1}$. We then take the disjunction of the conditions computed at the points before the
recursive call-sites and at the end of the procedure, and existentially quantify all local variables.
We refer to the resulting formula as $\nextfn(I_{k-1}, \tbody(\proc,I_{k-1}))$.
We take the disjunction of this formula with $I_{k-1}$ and simplify it to get $I_k$.

%

In the following formalization of this semi-algorithm, we exploit the fact that the \code{assert}
statements are added precisely at every recursive callsite and end of procedure and
these are the places where we take the conditions to be disjuncted.
\begin{figure}
\[
\begin{array}{ll}
\multicolumn{2}{l}{I_0 = \initstatefn(\proc)} \\
\multicolumn{2}{l}{I_{k} = \simplifyfn(I_{k-1} \vee \nextfn(I_{k-1}, \tbody(\proc,I_{k-1})))} \\
\\
\nextfn(\pre, \code{assert e}) &= \exists \ell_1 \cdots \ell_m \pre (\text{where $\ell_1, \cdots, \ell_m$ are local variables in $\pre$})\\
\nextfn(\pre, \stmtSA ; \stmtSB) &= \nextfn(\pre, \stmtSA) \vee \nextfn( \postfn(\pre, \stmtSA), \stmtSB) \\
\multicolumn{2}{l}{
\nextfn(\pre, \code{if \expr{} then \stmtSA{} else \stmtSB{}}) = \nextfn(\pre \wedge \expr, \stmtSA) \vee \nextfn(\pre \wedge \neg \expr, \stmtSB)
} \\
\nextfn(\pre, \stmt) &= \text{false} (\text{for all other \stmt})
\end{array}
\]
\caption{Iterative computation of invariant.}
\label{fig:invgen}
\end{figure}

In our running example, $I_0$  is`g = -1 $\wedge$ lastN = 0'.
Applying $\nextfn$ to $I_0$
yields $I_0$ itself as the pre-condition at the
point just before the recursive call-site, and `(g = -1 $\wedge$ lastN = 0) $\vee$ g = lastN *
$\procname$(lastN-1)' (after certain simplifications) as the pre-condition
at the end of the
procedure. Therefore, $I_1$ is `(g = -1 $\wedge$ lastN = 0) $\vee$ g = lastN *
$\procname$(lastN-1)'. When we apply $\nextfn$ to $I_1$,
the computed pre-conditions are $I_1$ itself at both the program points
mentioned above. Therefore, the approach terminates with $I_1$ as the
candidate invariant.


\section{Evaluation}\label{sec:experiments}

We have implemented our OP checking approach as a prototype using the Boogie
framework~\cite{leino2008boogie}, and have evaluated the approach using
this implementation on several examples. The objective of this evaluation
was primarily a sanity check, to test how our approach does on a set of
OP as well as non-OP procedures.

We tried several simple non-OP programs, and our implementation terminated
with a ``no'' answer on all of them.  We also tried the approach on several
OP procedures: (1) the `factCache' running example, (2) a version of
a factorial procedure that caches all arguments seen so far and their
corresponding return values in an array, (3) a version of factorial that caches
only the return value for argument value 19 in a scalar variable, (4) a
recursive procedure that returns the $n^\mathit{th}$ Fibonacci number and
caches all its arguments and corresponding return values seen so far in an
array, and (5) a ``matrix chain multiplication'' (MCM) procedure.
The last example  is based
on dynamic programming, and hence naturally uses a table to memoize 
results for sub-problems. Here, observational purity implies that the procedure always
returns the same solution for a given sub-problem, whether a hit was found
in the table or not.  The appendix of a technical report associated with
this paper  depicts all the procedures
mentioned above as created by us directly in Boogie's language, as well as
the invariants that we supplied manually (in  SMT2
format).

It is notable that our ``existential approach'' causes the theorem prover
to not scale to even simple examples. The ``impurity witness'' approach
terminated on all the examples mentioned above with a ``yes'' answer,
with the theorem prover
taking less than 1 second on each example.

\section{Related Work}\label{sec:related}

The previous work that is most closely related to our work is by Barnett et
al.~\cite{barnett200499,barnett2006allowing}. Their approach is based on
the same notion of observational purity as our approach. Their approach is
structurally similar to ours, in terms of needing an invariant, and using
an inductive check for both the validity of the invariant as well as the
uniqueness of return values for a given argument.  However, their approach
is based on a more complex notion of invariant than our approach, which
relates pairs of global states, and does not use a function symbol to
represent recursive calls within the procedure. Hence, their approach does
not extend readily to recursive procedures; they in fact state that ``there
is a circularity - it would take a delicate argument, and additional
conditions, to avoid unsoundness in this case''. Our idea of allowing the
function symbol in the invariant to represent the recursive call allows
recursive procedures to be checked, and also simplifies the specification
of the invariant in many cases. 

Cok et al.~\cite{cok2008extensions}  generalize the work of Barnett
et al.'s work, and suggest classifying procedures into categories ``pure'',
``secret'', and ``query''. The ``query'' procdures are observationally
pure. Again, recursive procedures are not addressed.

Naumann~\cite{naumann2007observational} proposes a notion of observational
purity that is also the same as ours. Their paper gives a rigorous but manual
methodology for proving the observational purity of a
given procedure. Their methodology is not similar to ours; rather, it is
based finding a \emph{weakly pure} procedure that simulates the given
procedure as far as externally visible state changes and the return value
are concerned. They have no notion of an invariant that uses a function
symbol that represents the procedure, and they don't explicitly address the
checking of recursive procedures.

There exists a significant body of work on identifying differences between
two similar procedures.  For instance, differential assertion
checking~\cite{lahiri2013differential} is a representative from this body,
and is for checking if two procedures can ever start from the same state
but end in different states such that exactly one of the ending states
fails a given assertion. Their
approach is based on logical reasoning, and accommodates recursive
procedures. Our impurity witness approach has some similarity with their
approach, because it is based on comparing the given procedure with
itself. However, our comparison is stricter, because in our setting,
starting with a common argument value but from different global states that
are both within the invariant should not cause a difference in the return
value. Furthermore, technically our approach is different because we use an
invariant that refers to a function symbol that represents the procedure
being checked, which is not a feature of their invariants. Partush et
al.~\cite{partush2013abstract} solve a similar problem as differential
assertion checking, but using abstract interpretation instead of logical
reasoning.

There is a substantial body of work on checking if a procedure is
\emph{pure}, in the sense that it does not
modify any objects that existed before the procedure was invoked, and does not
modify any  global variables. Salcianu et
al.~\cite{sualcianu2005purity} describe a static analysis to check purity.
Various tools exist, such as
JML~\cite{leavens2008jml} and Spec\#~\cite{barnett2004spec}, that use logical
techniques based on annotations to prove procedures as pure.  Purity is a
more restrictive notion than observational purity; procedures such
as our `factCache' example are observationally pure, but not pure because
they use as well as update state that persists between calls to the
procedure.

%

%
%
\bibliographystyle{splncs}
\bibliography{references}



\appendix
\begin{subappendices}
\renewcommand{\thesection}{\Alph{section}}
\section{Proofs}

\begin{lemma}
If there exists no impurity witness $(\pi_1,\pi_2)$ and there exists
no $\inv$-violation witness $\pi$, then $\proc$ satisfies $\pureinv$.
\end{lemma}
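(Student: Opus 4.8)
The plan is to unwind the two witness notions and show that their joint absence yields, essentially immediately, a witness function $f$ for $\pureinv$ in the sense of Definition~\ref{def:pureinv}.

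First I would construct $f$. Since no impurity witness exists, in particular no pair $(\pi_1,\pi_2)$ of feasible executions (including the diagonal pairs with $\pi_1 = \pi_2$) contains two traces with the same input value but different return values; hence the relation $\{(\inputval{\pi},\outputval{\pi}) \mid \pi \text{ a trace of } \proc\}$ is a partial function. Let $f$ be any total function extending it (fixing $f$ arbitrarily on inputs never supplied to $\proc$). By construction $\outputval{\pi} = f(\inputval{\pi})$ for every trace $\pi$, which already discharges the first conjunct of $\pureinv$ for this one $f$, and moreover this same $f$ is compatible (in the sense used in the impurity-witness proof) with every feasible execution at once.

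Second, I would establish $(\pi,f) \models \inv$ for every trace $\pi$, arguing by contradiction. If $(\pi,f) \nvDash \inv$ for some trace $\pi$, then some entry-state or exit-state $\sigma$ occurring in $\pi$ has $(\sigma,f) \nvDash \inv$. Choose a feasible execution $\hat\pi$ having $\pi$ as a substring; since being an entry-state or exit-state is a property of the state alone, $\sigma$ is an entry/exit state of $\hat\pi$ as well, so $(\hat\pi,f) \nvDash \inv$. Because no impurity witness exists, $(\hat\pi,\hat\pi)$ is pure, \ie, $\hat\pi$ is a pure feasible execution; and $f$ is compatible with $\{\hat\pi\}$. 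Hence $\hat\pi$ satisfies the definition of an $\inv$-violation witness, contradicting the hypothesis. So $(\pi,f) \models \inv$ for every trace, and together with the first step this gives that $\proc$ satisfies $\pureinv$.

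I do not expect a real obstacle here: the lemma is largely a repackaging of the definitions. The one point that needs care is that a single $f$ must serve the whole library simultaneously --- global well-definedness of the input/output relation has to be pulled out of ``no impurity witness'' using pairs of (possibly equal) feasible executions, and the very same $f$ must then be reused in the $\inv$-violation argument, so the construction of $f$ must come first and be independent of the invariant reasoning. A minor secondary point is to observe that entry/exit-state status is intrinsic to a state, so that an invariant violation detected within a trace lifts to the feasible execution that contains it.
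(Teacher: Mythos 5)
Your proposal is correct and follows essentially the same route as the paper's proof: build $f$ as a total extension of the input--output relation (well-defined because no impurity witness exists, including over diagonal pairs), then use the absence of an $\inv$-violation witness to conclude $(\pi,f)\models\inv$. Your version just makes explicit two steps the paper leaves implicit --- the lifting from a trace to a feasible execution containing it, and the fact that the single $f$ is simultaneously compatible with all feasible executions.
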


\begin{proof}
Consider the set $\Pi$ of all feasible executions of $\proc$.
If there exists no impurity witness, then there exists a function $f$ compatible with $\Pi$.
(We take the partial function $\{ (\inputval{\pi}, \outputval{\pi}) \; | \; \pi \in \Pi \}$ and
extend it to be a total function.)
Further, since there exists no $\inv$-violation witness, we are guaranteed that $(\pi,f) \models \inv$
for every $\pi \in \Pi$.
\end{proof}

\begin{lemma}
If there exists a $\inv$-violation witness $\pi$, then there exists a minimal $\inv$-violation witness.
\end{lemma}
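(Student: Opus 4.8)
## Proof Proposal

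The plan is to prove the statement by induction on the length of the $\inv$-violation witness $\pi$, using the well-ordering of the prefixes. Given some $\inv$-violation witness $\pi$, I would consider the set of all prefixes of $\pi$ that are themselves $\inv$-violation witnesses. This set is nonempty (it contains $\pi$ itself), and since prefixes of a finite execution form a finite, linearly ordered set, it has a shortest element, call it $\pi^*$. I would then argue that $\pi^*$ is minimal in the sense defined in the excerpt: by construction no proper prefix of $\pi^*$ is an $\inv$-violation witness, which is exactly the definition of minimality for $\inv$-violation witnesses given just before the statement.

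The one subtlety to handle carefully is that being an ``$\inv$-violation witness'' carries two requirements: the execution must be \emph{pure}, and there must exist a function $f$ compatible with it such that $(\pi,f) \nvDash \inv$. So I need to check that $\pi^*$, being a prefix of the pure execution $\pi$, is itself pure. This follows because any trace contained in a prefix $\pi^*$ of $\pi$ is also a trace contained in $\pi$ (traces are substrings corresponding to single invocations, and a prefix cannot introduce new input-output pairs), so if $(\pi,\pi)$ has no two traces with equal input and differing output, neither does $(\pi^*,\pi^*)$. The compatibility-of-$f$ part requires a brief observation: if $\pi$ admits a compatible $f$ witnessing $(\pi,f)\nvDash \inv$, I must confirm that $\pi^*$ also admits some compatible $g$ witnessing the violation — but since we selected $\pi^*$ from prefixes that are already $\inv$-violation witnesses, this is immediate by the selection itself, not something that needs to be re-derived.

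The main obstacle, such as it is, is purely bookkeeping: making precise that the violation ``occurs at'' some entry- or exit-state $\sigma$ in $\pi$, and that the shortest prefix of $\pi$ ending at (or past) that state is still an $\inv$-violation witness. Concretely, if $(\sigma, f) \nvDash \inv$ for some state $\sigma$ in $\pi$ and some $f$ compatible with $\{\pi\}$, then the prefix $\pi'$ of $\pi$ ending at $\sigma$ is pure (as above) and $f$ restricted appropriately is still compatible with $\{\pi'\}$ and still falsifies $\inv$ at $\sigma$; hence $\pi'$ is an $\inv$-violation witness. Among all such prefixes there is a shortest one, and by the minimality of its length it has no proper prefix that is an $\inv$-violation witness. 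That shortest prefix is the desired minimal $\inv$-violation witness, completing the proof.
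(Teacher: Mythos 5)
Your argument is correct: the paper states this lemma without proof, and your extremal argument---take the shortest prefix of $\pi$ that is itself an $\inv$-violation witness, which exists because the set of such prefixes is finite, nonempty (it contains $\pi$), and linearly ordered, and which is minimal because any proper prefix of it is an even shorter prefix of $\pi$ and hence not a witness---is exactly the routine justification the authors evidently intended to leave implicit. Your supplementary checks (that a prefix of a feasible execution is feasible, that its traces form a subset of the traces of $\pi$ so purity and compatibility are inherited) are sound, though as you note they are not strictly needed once one selects only from prefixes that are already witnesses.
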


\begin{lemma}
If there exists an impurity witness, then there exists a minimal impurity witness or a minimal
$\inv$-violation witness.
\end{lemma}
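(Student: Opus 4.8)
The statement to prove is: if there exists an impurity witness $(\pi_1,\pi_2)$, then there exists a minimal impurity witness or a minimal $\inv$-violation witness. The plan is to argue by a well-founded descent on the combined ``size'' of the pair of executions. Concretely, I would define the size of a pair $(\pi_1,\pi_2)$ to be the pair of lengths $(|\pi_1|,|\pi_2|)$ ordered, say, by the sum $|\pi_1|+|\pi_2|$ (any well-founded order refining the prefix order works). Starting from the given impurity witness $(\pi_1,\pi_2)$, I will repeatedly shrink it until it becomes minimal, while being careful that the shrinking process either terminates at a minimal impurity witness or instead exposes a minimal $\inv$-violation witness along the way.

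The main argument is a case analysis on why a given impurity witness $(\pi_1,\pi_2)$ fails to be minimal. By the definition of minimality recalled in the excerpt, failure means one of: (a) some proper prefix $\pi_1'$ of $\pi_1$ makes $(\pi_1',\pi_2)$ impure; (b) symmetrically for a proper prefix $\pi_2'$ of $\pi_2$; (c) some proper prefix $\pi_1'$ of $\pi_1$ is itself a $\inv$-violation witness; or (d) symmetrically for $\pi_2$. In cases (a) and (b) we have obtained a strictly smaller impurity witness, so we recurse (the well-founded order guarantees termination). In cases (c) and (d) we have, in hand, \emph{a} $\inv$-violation witness $\pi_1'$ (resp. $\pi_2'$); we then invoke the preceding lemma (``if there exists a $\inv$-violation witness $\pi$, then there exists a minimal $\inv$-violation witness'') to conclude. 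Thus every non-minimal impurity witness either descends to a smaller impurity witness or immediately yields a minimal $\inv$-violation witness, and since the descent cannot continue forever, the process halts in one of the two desired outcomes.

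The step I expect to be the main obstacle is pinning down precisely the shrinking step in cases (a)/(b) so that it is genuinely well-founded and so that ``$(\pi_1',\pi_2)$ impure'' actually gives back a bona fide impurity witness (in particular, that the offending pair of incompatible traces still lies inside $(\pi_1',\pi_2)$, and that $\pi_1'$ is still a feasible execution — which it is, being a prefix of a feasible execution). One subtlety is that minimality is defined only over \emph{proper} prefixes, so I must make sure the recursion strictly decreases the measure; taking $\pi_1'$ (resp. $\pi_2'$) to be a shortest proper prefix witnessing the failure handles this cleanly. A second minor point worth spelling out is that the four failure conditions are exhaustive: if none of (a)–(d) holds then $(\pi_1,\pi_2)$ already satisfies all four clauses of the minimality definition and we are done immediately. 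Everything else is routine bookkeeping on prefixes of executions.
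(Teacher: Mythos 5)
The paper states this lemma in the appendix without any proof, so there is nothing to compare your argument against; judged on its own, your well-founded descent is the natural argument and is essentially sound. A non-minimal impurity witness fails one of the four clauses of minimality for some proper prefix; the failures of clauses (a) and (b) hand you a strictly smaller impurity witness (proper prefixes of feasible executions are feasible, and any strict decrease in length suffices for termination --- you do not actually need the \emph{shortest} offending prefix), and the failures of (c) and (d) hand you an $\inv$-violation witness, to which the preceding lemma applies. The exhaustiveness observation is correct, and executions are finite state sequences, so the descent on $|\pi_1|+|\pi_2|$ terminates.

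The one point you should tighten is the passage from ``$\pi_1'$ does not satisfy the invariant'' to ``$\pi_1'$ is a $\inv$-violation witness.'' The paper defines a $\inv$-violation witness to be a \emph{pure} feasible execution admitting a compatible $f$ with $(\pi,f)\nvDash\inv$, and defines ``satisfies the invariant'' as the complement only within that class. So a proper prefix $\pi_1'$ that is itself impure neither satisfies the invariant nor is a $\inv$-violation witness, and your cases (c)/(d) do not cover it. The fix is easy --- such a $\pi_1'$ gives the impurity witness $(\pi_1',\pi_1')$, on which you recurse --- but note that this step can \emph{increase} the sum $|\pi_1'|+|\pi_1'|$ relative to $|\pi_1|+|\pi_2|$ (e.g., a long $\pi_1$ paired with a short $\pi_2$), so the plain sum is not a valid measure for this extra case; a multiset or lexicographic ordering on $\{|\pi_1|,|\pi_2|\}$ works. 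This is a definitional wrinkle inherited from the paper rather than a flaw in your strategy, but a complete write-up should address it.
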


\section{Examples}

\subsection{Factorial, caching single result}
Invariant:
\begin{verbatim}
(or (= nineteen (- 0 1)) (= nineteen (* (FactSingle 18) 19)))
\end{verbatim}

\begin{lstlisting}[language=c, caption= {Returns factorial of `n',
      and memoizes result for argument value
      `19' in global variable `nineteen'.}, label=lst:fact19]
var nineteen: int;
/* invariant: nineteen = -1 ||
  nineteen = 19*FactSingle(18)   */
procedure {:entrypoint} FactSingle(n: int) returns (r: int) modifies nineteen;{
  if( n <= 1) { r := 1;}
  else {
    if( n == 19) {
      if( nineteen == -1) {
        call nineteen := FactSingle(18);
        nineteen := nineteen * 19;
        r := nineteen;
      } else {
        r := nineteen;
      }
    } else {
      call r := FactSingle( n - 1);
      r := n * r;
    }
  }
}
\end{lstlisting}

\subsection{Factorial, caching in  array}
Invariant:
\begin{verbatim}
(forall ((k Int) ) (or (= (select g k) 0) (= (*
k (FactArray (- k 1))) (select g k))))
\end{verbatim}
\begin{lstlisting}[language=c, caption= {Returns factorial of `n', 
      and caches results for all argument values greater than 1 in global
    array `g'},
    label=lst:factArrayImpl]
var g: [int] int;
/* invariant: forall k. g[k] = 0 ||
  g[k] = k*FactArray(k-1)   */  
procedure {:entrypoint} FactArray(n: int) returns (r: int) modifies g;{
  var k :int;
  if( n <= 1) { r := 1;}
  else {
    if( g[n] == 0) {
      call k := FactArray(n - 1);
      g[n] := k * n;
    } 
    r  := g[n];
  }
}
\end{lstlisting}

\subsection{Factorial, caching only last-seen argument}
Invariant:
\begin{verbatim}
(or (= g (- 0 1)) (= g (* (FactRecent (+ (- 0 1) lastN)) lastN)))
\end{verbatim}
\begin{lstlisting}[language=c, caption= {Returns factorial of `n', 
      and caches last seen argument in lastN and corresponding return value
    in g}, label=lst:factorialRecent]
var lastN: int;
var g: int;
//invariant : g = -1 || g = lastN * FactRecent(lastN - 1) 
procedure {:entrypoint} FactRecent(n: int) returns (r: int) modifies lastN, g;{
  if( n <= 1) { r := 1;}
  else {
     if( n == lastN && g != -1) {
        r := g;
      } else {
      call r := FactRecent( n - 1);    
      r := n * r;
      lastN := a;
      g := r;
    }
  }
}
\end{lstlisting}

\subsection{Fibonacci, caching in array}
Invariant:
\begin{verbatim}
(forall (k Int) (or (= (select cache k) 0) (= (+ (fib (- k 1)) (fib (- k 2)))
  (select cache k))))
\end{verbatim}
\begin{lstlisting}[language=c, caption= {Returns $n$th Fibonacci number, and
      caches all arguments and return values seen so far in global array `cache'}]
/* invariant : forall k. cache[k] = 0 OR cache[k] = fib(k -1) + fib( k -2) */
var cache:[int] int;
procedure {:entrypoint} fib(n: int) returns (r: int) modifies cache;{
  var a, b : int;
  if( n <= 2) {
    r := 1;
  } else {
    if(cache[n] != 0) {
      r := cache[n];
    } else {
      call a := fib(n -1);
      call b := fib(n -2);
      r := a + b;
      cache[n] := r;
    }
  }
}
\end{lstlisting}

\subsection{Matrix Chain Multiplication}
Invariant:
\begin{verbatim}
 (forall ((i Int) (j Int)) (or (= (select m i j) (- 0 1))
   (= (chooseSplit i j i (- 0 1)) (select m i j))))
\end{verbatim}

\begin{lstlisting}[language=c, caption= {Returns the minimum number of multipications needed to multiply a sequence of matrices.}, label=lst:mcm]
var p: [int] int;
var m: [int, int] int;
/* invariant : forall i, j. m[i, j] = -1 OR m[i, j] = chooseSplit(i, j, i, -1) */
procedure {:entrypoint} mcm(i: int, j: int) returns (r: int) modifies m;{
        var k, q : int;
        var a, b :int;
        if(i == j) {
                m[i, j] := 0;
                r := 0;
        } else {
                if( m[i, j] > 0) {
                        r := m[i, j];
                } else {
                        k := i;
                        call r := chooseSplit(i, j, k, m[i, j]);
                        m[i, j] := r;
                }
        }
}

procedure chooseSplit(i: int, j: int, k:int, min :int) returns (r: int) modifies m;{
        var a, b, q : int;
        var min1 :int;
        if(k >= j) {
                r:= min;
        } else {        
                call a := mcm(i, k);
                call b := mcm(k+1, j);
                q := a + b + p[i-1] * p[k] * p[j];
                if( q < min) {
                  min1 := q;
                } else {
                  min1 := min;
                }
                call r := chooseSplit(i, j, k + 1, min1);
        }       
}
\end{lstlisting}

\subsubsection{Notes about how this program was verified.}
In this example, `p' is an input array, and stores the dimensions of the matrices
to be multiplied. p$[i]$ and p$[i+1]$ store the dimensions of the $i^\mathit{th}$
matrix to be multiplied. Table `m' is computed by the procedure. m$[i,j]$
stores the minimum number of scalar multiplications required
to multiply the sequence of matrices from the $i^\mathit{th}$ matrix
through the $j^\mathit{th}$ matrix. The procedure `mcm'
basically returns m[i,j] given arguments i and j. chooseSplit is a
tail-recursive procedure, which is used to consider all possible k, i $<$ k
$\leq$ j, at which to split the sequence $i, i+1, \ldots, j$. This search
is required to find the optimal split point.

Note, mcm and chooseSplit call each other. Although we presented our
approach as if it can check a given single procedure, it can also check if
a set of mutually-recursive procedures are all OP. The idea is that each
procedure needs an invariant, and this invariant can use function symbols
to refer to all procedures called by this procedure. Each procedure can
then be checked independently using our approach as-is. Intuitively, this
works because we check all the procedures, and while checking any procedure
we assume as an inductive hypothesis that the other procedures are OP.

The invariant shown in the example above is for procedure `mcm'. We applied
our tool only on procedure mcm. We did not apply our tool on procedure
`chooseSplit', because currently our tool is not set up to analyze all
procedures in a multi-procedure program. However, it is obvious that the
impurity witness approach would certify procedure chooseSplit as OP, even
without any invariant. This is because this procedure does not refer to or
update the mutable global variable `m'. Therefore, for any tuple of
argument values, the procedure would always follow the same path no matter
how many times this tuple is repeated. Therefore, if the calls encountered in
this path are assumed to be to observationally pure procedures (inductive
hypothesis), then the path must return the same value every time.

\end{subappendices}

\end{document}